\newcommand{\proc}[1]{\textup{\textsf{#1}}}
\newtheorem{theorem}{Theorem}
\newtheorem{definition}{Definition}
\def\hlinewd#1{%
	\noalign{\ifnum0=`}\fi\hrule \@height #1 %
	\futurelet\reserved@a\@xhline}
\begin{document}

\preprint{APS/123-QED}

\title{Quantum algorithm for calculating risk contributions in a credit portfolio}

\author{Koichi Miyamoto}
\email{koichi.miyamoto@qiqb.osaka-u.ac.jp}
\affiliation{Center for Quantum Information and Quantum Biology, Osaka University \\ 1-3 Machikaneyama, Toyonaka, Osaka, 560-8531, Japan}

\date{\today}

\begin{abstract}

Finance is one of the promising field for industrial application of quantum computing.
In particular, quantum algorithms for calculation of risk measures such as the value at risk and the conditional value at risk of a credit portfolio have been proposed.
In this paper, we focus on another problem in credit risk management, calculation of risk contributions, which quantify the concentration of the risk on subgroups in the portfolio.
Based on the recent quantum algorithm for simultaneous estimation of multiple expected values, we propose the method for credit risk contribution calculation.
We also evaluate the query complexity of the proposed method and see that it scales as $\widetilde{O}\left(\sqrt{N_{\rm gr}}/\epsilon\right)$ on the subgroup number $N_{\rm gr}$ and the accuracy $\epsilon$, in contrast with the classical method with $\widetilde{O}\left(\log(N_{\rm gr})/\epsilon^2\right)$ complexity.
This means that, for calculation of risk contributions of finely divided subgroups, the advantage of the quantum method is reduced compared with risk measure calculation for the entire portfolio.
Nevertheless, the quantum method can be advantageous in high-accuracy calculation, and in fact yield less complexity than the classical method in some practically plausible setting. 

\end{abstract}

\pacs{Valid PACS appear here}
                              
\maketitle

\section{\label{sec:intro}Introduction}

Following the recent advance of quantum computing technology, people are now looking for industrial applications.
Finance is one of promising fields (see \cite{Orus,Egger,Bouland,Herman} as comprehensive reviews).
In particular, credit portfolio risk measurement is a problem for which applications of quantum algorithms are actively investigated.
Every bank has a credit portfolio that consists of loans it has issued, and it is exposed to the credit risk, that is, the risk to incur the loss by defaults of obligors.
In order to monitor such a risk, banks calculate risk measures that quantify the amount of the risk, such as the value at risk (VaR), the percentile (e.g. 99\%) of the loss distribution, and the conditional VaR (CVaR), the conditional expectation of the loss given that it exceeds the VaR (see \cite{Fischer} as a recent review).
They are often evaluated by the Monte Carlo method \cite{Glasserman}, in which defaults are randomly generated by some mathematical model and many sample values of the loss are taken.
Generating many samples, whose number is typically of order $10^6$, for a large credit portfolio, which can contain millions of obligors for major banks, is of high computational cost.
On the other hand, there are some quantum algorithms for Monte Carlo integration \cite{Montanaro,Suzuki,Herbert} based on quantum amplitude estimation \cite{Suzuki,Brassard,Aaronson,Nakaji,Giurgica-Tiron,Grinko,Tanaka,Uno,Wang}. 
Although the classical Monte Carlo integration has sample complexity scaling as $O(\epsilon^{-2})$ on $\epsilon$, the error tolerance for the integral, query complexity in the quantum counterparts scales as $O(\epsilon^{-1})$, which is often referred to as quantum quadratic speedup.
Therefore, quantum Monte Carlo integration has been applied to credit portfolio risk measurement in previous studies \cite{Egger2,Miyamoto,Kaneko}, and is expected to provide speedup and sophistication for credit risk management in banks in the future.

In this paper, we focus on another problem in credit risk management, that is, calculation of risk contributions.
VaR and CVaR indicate the amount of the risk in the entire portfolio, but we sometimes want to know how each subportfolio contributes to the risk measures.
In other words, we want to quantitatively evaluate the concentration of the risk to subgroups of obligors.
If a bank has such measures, they will help it to analyze and disperse the risk.
For example, it can notice that the risk concentrates on obligors in some specific industrial sector or some specific region, and that some business division in it are taking the risk too much.
In fact, such measures called risk contributions have been defined and how to calculate them has been studied \cite{Fischer,Litterman,Tasche,Kurth,Kalkbrener,Kalkbrener2,Glasserman2,Martin,Muromachi,Muromachi2}.
They are written as conditional expected values, and therefore calculating them by Monte Carlo is more costly than the entire risk measures, since only a small fraction of samples matches the conditions and we need to generate more samples.

Then, this paper aims at quantum speedup of credit risk contribution calculation.
Note that original quantum algorithms for Monte Carlo integration \cite{Montanaro,Suzuki,Herbert} output an estimate of a single expected value, and that sequentially applying such an algorithm to calculation of risk contributions of $N_{\rm gr}$ obligor groups, which leads to $O(N_{\rm gr}/\epsilon)$ complexity, is not efficient.
Instead, we resort to the recently proposed quantum algorithm for simultaneous calculation of expected values of multiple random variables \cite{Cornelissen} (see also \cite{Huggins}).
This algorithm outputs $d$ expected values with accuracy $\epsilon$, making $\widetilde{O}\left(\sqrt{d}/\epsilon\right)$ queries\footnote{$\widetilde{O}(\cdot)$ denotes $O(\cdot)$ in the big-O notation with some logarithmic factors hidden.} to two oracles, of which one generates a state that encodes the probability distribution in amplitudes and the other computes the random variables. 
Presenting how to construct these oracles concretely, we show that this quantum algorithm can be in fact applied to risk contribution calculation.
We then see that the number of queries to building-block operations in the proposed method scales as $\widetilde{O}\left(\sqrt{N_{\rm gr}}/\epsilon\right)$ on $N_{\rm gr}$ and $\epsilon$.
This means quantum quadratic speedup with respect to $\epsilon$, but not with respect to $N_{\rm gr}$, since the classical Monte Carlo method has $\widetilde{O}\left(\log(N_{\rm gr})/\epsilon^2\right)$ complexity.
In general, the more finely divided obligor groups we set, the more the advantage of the quantum method is reduced.
Nevertheless, as we will see later, in some practically plausible problem setting, the proposed method seems to be advantageous against the classical method in terms of query complexity.

The rest of this paper is organized as follows.
In Section \ref{sec:credit}, we review the Merton model \cite{Merton}, a widely-used mathematical model for credit portfolio risk measurement, and introduce risk measures such as VaR and CVaR and risk contributions.
Section \ref{sec:main} is the main part of this paper.
In this section, as preparations, we present some oracles used as building blocks in the proposed method such as arithmetic operations, controlled rotation and generation of a state corresponding to a standard normal random variable, and introduce quantum multiple expected value estimation \cite{Cornelissen} and fixed-point quantum amplitude amplification (QAA) \cite{Yoder}, which is another important base quantum algorithm.
Then, we present the proposed quantum method as Theorem \ref{th:main}, the main result of this work, and discuss its complexity and comparison with the classical method.
Section \ref{sec:sum} summarizes this paper.

\subsection{Notations}

Here, we summarize some notations we later use.
$\mathbb{R}_+$ denotes the set of all positive real number: $\mathbb{R}_+:=\{x\in\mathbb{R} \ | \ x>0\}$.
For $n\in\mathbb{N}$, we define $[n]:=\{1,...,n\}$.
For $\epsilon\in\mathbb{R}_+$, we say that $x\in\mathbb{R}$ is a $\epsilon$-approximation of $y\in\mathbb{R}$ if $|x-y|\le\epsilon$.
$1_C$ denotes the indicator function, which takes 1 if the condition $C$ is satisfied and 0 otherwise.
$\delta_{a,b}$ denotes the Kronecker delta: for integers $a$ and $b$, $\delta_{a,b}=1$ if $a=b$ and 0 otherwise. 

\section{\label{sec:credit} Credit risk measures and risk contributions}

In this section, we introduce some risk measures and risk contributions for a credit portfolio.

\subsection{\label{sec:model} Credit risk model}

First, we introduce a mathematical model for credit portfolio risk measurement, the Merton model \cite{Merton}, which is widely used in practical business and considered in this paper.
Let us consider a credit portfolio consisting of $N_{\rm obl}$ obligors, whose exposures\footnote{An exposure of an obligor is the amount of loss that occurs when it defaults. It is expressed as the product of the loan amount and the loss given default (LGD), the fraction of the amount which is not recovered. Although the LGD is sometimes modeled as a random variable, it is treated as a constant and therefore so is the exposure in this paper.} are $e_1,...,e_{N_{\rm obl}}\in\mathbb{R}_+$.
Then, in the Merton model, we model the loss $L$ in the portfolio as follows, using independent random variables $X_0,X_1,...,X_{N_{\rm obl}}$ that follows the standard normal distribution:
\begin{equation}
	L(X_0,X_1,...,X_{N_{\rm obl}})=\sum_{k=1}^{N_{\rm obl}} e_kY_k(X_0,X_k). \label{eq:loss}
\end{equation}
Here, for each $k\in[N_{\rm obl}]$, $Y_k$ is defined as
\begin{equation}
	Y_k(X_0,X_k):=1_{Z_k(X_0,X_k)<z_k}, \label{eq:Yk}
\end{equation}
and $Z_k$ is defined as
\begin{equation}
	Z_k(X_0,X_k):=a_k X_0 + \sqrt{1-a_k^2}X_k, \label{eq:Zk}
\end{equation}
with constants $z_k\in\mathbb{R}$ and $a_k\in(0,1)$.
$Y_k=1$ and $Y_k=0$ mean that the $k$th obligor defaults and not, respectively.
$Z_k$ is called the value of the firm of the $k$th obligor, and it defaults if $Z_k$ falls below some threshold $z_k$.
It is set according to the default probability of the obligor, which is usually set based on some credit rating model.
Note that $X_0$ is common for all the obligors but $X_1,...,X_{N_{\rm obl}}$ affect only the first, ..., $N_{\rm obl}$th obligors, respectively.
$X_0$ is called a systematic risk factor, which reflects the situation of macro economy, and $X_1,...,X_{N_{\rm obl}}$ are called idiosyncratic risk factors, which reflect the matters unique to the credit of the individual obligor\footnote{Although the model with multiple systematic risk factors are often used, we consider the single-factor case for simplicity in this paper. Extending the discussion in this paper to the multi-factor case is straightforward.}.
$Z_k$ is the linear combination of $X_0$ and $X_k$ as (\ref{eq:Zk}), and follows the standard normal distribution too. 
The coefficients $a_1,...,a_{N_{\rm obl}}$ control correlations between the values of the firms: ${\rm Cor}(Z_k,Z_{k^\prime})=\alpha_k\alpha_{k^\prime}$ for different $k,k^\prime\in[N_{\rm obl}]$.
Therefore, setting these coefficients close to 1 leads to strong correlations, which make simultaneous defaults of many obligors more probable and fatten the tail of the loss distribution.
Also note that, under the condition that $X_0$ takes a value $x_0\in\mathbb{R}$, $Y_k$ follows a Bernoulli distribution with the probability that $Y_k=1$ being
\begin{equation}
	P^{\rm def}_k(x_0):=\Phi_{\rm SN}\left(\frac{z_k-a_kx_0}{\sqrt{1-a_k^2}}\right),
\end{equation}
where $\Phi_{\rm SN}$ is the cumulative distribution function (CDF) of the standard normal distribution.

\subsection{\label{sec:riskMea} Risk measures}

In order to quantitatively assess the risk in a portfolio, some risk measures have been proposed.
Most widely used one is the value at risk (VaR).
Given $\alpha\in(0,1)$, which is typically set so that $1-\alpha=0.99$ (99\%) or 0.999 (99.9\%), we define the $100(1-\alpha)$\% VaR as
\begin{equation}
	V_\alpha := \inf\{x\in\mathbb{R} \ | \ {\rm Pr}(L(X_0,X_1,...,X_{N_{\rm obl}})\ge x)\le \alpha \}.
	\label{eq:VaR}
\end{equation}
This means that the probability that the loss larger than $100(1-\alpha)$\% VaR occurs is at most $\alpha$.

Another widely used measure is the conditional value at risk (CVaR), which is also known as the expected shortfall.
Given $v\in\mathbb{R}_+$, we define
\begin{equation}
	C_v := \mathbb{E}_{\rm Mer}[L(X_0,X_1,...,X_{N_{\rm obl}}) \ | \ L(X_0,X_1,...,X_{N_{\rm obl}})\ge v], \label{eq:CVaR}
\end{equation}
where $\mathbb{E}_{\rm Mer}[\cdot]$ denotes the (conditional) expected value with respect to randomness of $X_0,X_1,...,X_{N_{\rm obl}}$.
Then, $100(1-\alpha)$\% CVaR is defined as $C_{V_\alpha}$, that is, (\ref{eq:CVaR}) with $v=V_\alpha$.
In practice, we first obtain some estimation $v$ of the VaR, and then calculate $C_v$ as an estimation of the CVaR.

\subsection{\label{sec:riskCont} Risk contributions}

In practice, calculating only risk measures for the entire portfolio is not sufficient.
Sometimes, we need to decompose the risk measures to contributions from the subgroups in the portfolio, in order to, for example, analyze concentration of the risk.
There are some studies on how to define and calculate such risk contributions \cite{Litterman,Tasche,Kurth,Kalkbrener,Kalkbrener2,Glasserman2,Martin,Muromachi,Muromachi2}, which we now outline.

First, we present how to represent the subgroups.
We assume that the obligors in the portfolio are divided into $N_{\rm gr}$ groups, in which the numbers of obligors are $n_1,...,n_{N_{\rm gr}}\in\mathbb{N}$.
Without loss of generality, we can assign small indexes to the obligors in the group with a small index.
That is, for each $K\in[N_{\rm gr}]$, we assume that the $K$th group contains $\kappa^K_1$th, ..., $\kappa^K_{n_K}$th obligors, where
\begin{equation}
	\kappa^K_k :=
	\begin{cases}
		k & ; \ {\rm for} \ K=1,k\in[n_1] \\
		\sum_{K^\prime=1}^{K-1} n_{K^\prime} + k & ; \ {\rm for} \ K\in\{2,...,N_{\rm gr}\},k\in[n_K]
	\end{cases}.
\end{equation}
Under this indexing, the loss in the $K$th group is given as
\begin{equation}
	L_K(X_0,X_{\kappa^K_1},...,X_{\kappa^K_{n_K}}):=\sum_{k=\kappa^K_1}^{\kappa^K_{n_K}} e_kY_k(X_0,X_k). \label{eq:LK}
\end{equation}
We set the group according to how finely we want to analyze the risk.
For example, if we want risk contributions of all individual obligors, we set $N_{\rm obl}$ groups, each of which contains only one obligor.
If not, we try appropriate grouping according to our purpose.
For example, we can group obligors by their industrial and/or regional sectors, in order to monitor which sector the risk concentrate on and how large it is.

Now, we define the risk contributions as follows.
With respect to the VaR, the risk contribution of the $K$th group is
\begin{equation}
	V^{K}_\alpha := \mathbb{E}_{\rm Mer}\left[L_K \ \middle| \ L= V_\alpha\right], \label{eq:VaRCont}
\end{equation}
and that with respect to the CVaR is
\begin{equation}
	C^{K}_v:= \mathbb{E}_{\rm Mer}\left[L_K \ \middle| \ L\ge v\right]. \label{eq:CVaRCont}
\end{equation}
Note that the sum of risk contributions over the groups is equal to the risk measure for the entire portfolio: $\sum_{K=1}^{N_{\rm gr}}V^{K}_\alpha=V_\alpha$ and $\sum_{K=1}^{N_{\rm gr}}C^{K}_v=C_v$, which makes these definitions of risk contributions plausible.

Also note that Monte Carlo estimation of risk contributions is harder than that of the risk measure for the entire portfolio.
This is because of the definitions of risk contributions as conditional expected values.
When we randomly generate samples of the random variable set $(X_0,X_1,...,X_{N_{\rm obl}})$ or $(X_0,Y_1,...,Y_{N_{\rm obl}})$, only the small fraction of them satisfy the condition $L=V_\alpha$ or $L\ge v$.
Although we can nonetheless estimate CVaR contributions with the probability that $L\ge v$ being not too small (say 0.01), estimating VaR contributions is much harder since the probability that the loss takes a specific value $V_\alpha$ is much smaller.
Even if we resort to quantum algorithms, estimating VaR contributions is hard, since the probability that the condition for the conditional expected value is satisfied affects the complexity, as we will see below.
In light of this, we hereafter focus on CVaR contributions, only referring to some existing studies on VaR contribution calculation, such as the Monte Carlo method combined with importance sampling \cite{Glasserman2} and semi-analytical methods based on saddle-point approximation \cite{Martin,Muromachi,Muromachi2}.

\section{\label{sec:main} Quantum algorithm for measuring risk contributions in a credit portfolio}

\subsection{Fixed-point binary representation by qubits}

Before we present the quantum algorithm for CVaR contribution calculation, we need some preparations.
First of all, we now present the current setting for numerical calculation on a quantum computer.
In this paper, we consider systems consisting of some qubits, and use the bit string on a quantum register as the real number in fixed-point binary representation.
More strictly, in this paper, a quantum register, or simply a register, means a system consisting of $N_{\rm dig}$ qubits, where $N_{\rm dig}$ is sufficiently large, and, for every $x\in\mathbb{R}$, $\ket{x}$ denotes a computational basis state on such a register with a bit string corresponding to ${\rm argmin}_{a\in \mathcal{G}}|x-a|$, where $\mathcal{G}$ is some set of fixed-point binary numbers with $N_{\rm dig}$ digits.
We assume that $\mathcal{G}$ covers the sufficiently wide range on the real axis with the sufficiently small interval, and that therefore approximating a real numbers with an element of $G$ yields only a small error, which we hereafter neglect.
Besides, for $\vec{v}=(v_1,...,v_d)^T\in\mathbb{R}^d$, $\ket{\vec{v}}$ denotes a state on $d$-registers system such that $\ket{\vec{v}}:=\ket{v_1}\cdots\ket{v_d}$. 

\subsection{Building-block oracles}

Next, we introduce some oracles, which are used as parts of the algorithm.

\begin{definition}
	We call the following oracles on a multi-register system arithmetic circuits.
	
	\begin{itemize}
		\item Addition $U_{\rm add}$: for any $x,y\in\mathbb{R}$, $U_{\rm add}\ket{x}\ket{y}=\ket{x}\ket{x+y}$
		\item Multiplication $U_{\rm mul}$: for any $x,y\in\mathbb{R}$, $U_{\rm add}\ket{x}\ket{y}\ket{0}=\ket{x}\ket{y}\ket{xy}$
		\item Comparison $U_{\rm comp}$: for any $x,y\in\mathbb{R}$, $U_{\rm comp}\ket{x}\ket{y}\ket{0}=\ket{x}\ket{y}(1_{x\ge y}\ket{1}+1_{x< y}\ket{0})$
		\item Standard normal CDF $U_{\Phi_{\rm SN}}$: for any $x\in\mathbb{R}$, $U_{\Phi_{\rm SN}}\ket{x}\ket{0}=\ket{x}\ket{\Phi_{\rm SN}(x)}$
		\item Arccos and square root $U_{\rm acsr}$: for any $x\in[0,1]$, $U_{\rm acsr}\ket{x}\ket{0}=\ket{x}\ket{\arccos(\sqrt{x})}$
	\end{itemize}
\end{definition}

In fact, many proposals on circuit implementations for addition and multiplication have been made \cite{Vedral,Beckman,Draper,Cuccaro,Takahashi,Draper2,Alvarez-Sanchez,Takahashi2,Takahashi3,Babu,Jayashree,Munoz-Coreas}.
Comparison of $x$ and $y$ is virtually a subtraction $y-x$, since, when we use the 2's complement method to represent a negative number, the top bit of $y-x$ is 1 if $y-x\le 0$ and $0$ otherwise.
Also note that subtraction can be done as addition in the 2's complement method.
For the CDF of the standard normal distribution $\Phi_{\rm SN}$, or, equivalently, the error function ${\rm erf}(x)=2\Phi_{\rm SN}(\sqrt{2}x)-1$, many accurate approximations with elementary functions have been proposed: for example, ${\rm erf}(x)\approx 1-1/(1+a_1x+\cdots a_6x^6)^{16}$ with $a_1 = 0.0705230784,a_2 = 0.0422820123,a_3 = 0.0092705272,a_4 = 0.0001520143,a_5 = 0.0002765672,a_6 = 0.0000430638$ \cite{Abramowitz}.
Therefore, $U_{\Phi_{\rm SN}}$ can be approximately implemented with circuits for addition, multiplication and division (note that division circuits have also been proposed \cite{Khosropour,Jamal,Dibbo,Thapliyal}).
For $U_{\rm acsr}$, we can use implementation of inverse trigonometric functions in \cite{Haner}, and that of square root in \cite{Munoz-Coreas2}.

\begin{definition}
	We call the following oracles $U_{\rm CRY}$ on a system consisting of a quantum register and a qubit the angle-controlled Y rotation:
	\begin{equation}
		U_{\rm CRY}:=\sum_{\theta\in \mathcal{G}} \ket{\theta}\bra{\theta}\otimes R_Y(\theta)
	\end{equation}
	Here, $R_Y(\theta):=\begin{pmatrix}	\cos\theta & \sin\theta \\ -\sin\theta & \cos\theta \end{pmatrix}$ for any $\varphi\in\mathbb{R}$.
\end{definition}
That is, $U_{\rm CRY}$ is the rotation on the Bloch sphere around the $Y$-axis, whose angle $\theta$ is specified by another register.
These gates can be implemented as a series of multi-controlled rotation gates with fixed angles \cite{Egger2}.

\begin{definition}
We call the following oracle on a quantum register the SN state generation oracle:
\begin{equation}
	U^{\rm SN}\ket{0}=\ket{\rm SN}:=\sum_{i=1}^{N_{\rm SN}}\sqrt{p^{\rm SN}_i}\ket{x^{\rm SN}_{i}}, \label{eq:OraSN}
\end{equation}
where $N_{\rm SN}$ is an integer not less than 2,
\begin{equation}
	p^{\rm SN}_i:=
	\begin{dcases}
		\int_{-\infty}^{\frac{x^{\rm SN}_1+x^{\rm SN}_2}{2}} \phi_{\rm SN}(x)dx & ; \ {\rm for} \ i=1 \\
		\int_{\frac{x^{\rm SN}_{i-1}+x^{\rm SN}_i}{2}}^{\frac{x^{\rm SN}_{i}+x^{\rm SN}_{i+1}}{2}} \phi_{\rm SN}(x)dx & ; \ {\rm for} \ i\in\{2,...,N_{\rm SN}-1\} \\
		\int^{\infty}_{\frac{x^{\rm SN}_{N_{\rm SN}-1}+x^{\rm SN}_{N_{\rm SN}}}{2}} \phi_{\rm SN}(x)dx & ; \ {\rm for} \ i=N_{\rm SN}
	\end{dcases},
\end{equation}
$\phi_{\rm SN}$ is the probability density function for the standard normal distribution, and, for $i\in[N_{\rm SN}]$, $x^{\rm SN}_i:=-D+2D(i-1)/(N_{\rm SN}-1)$ with $D\in\mathbb{R}_+$.
\end{definition}

This can be interpret as an oracle to generate a state in which the standard normal distribution is approximately encoded in the amplitudes.
That is, with sufficiently large $N_{\rm SN}$ and $D$, the random variable $X_{\rm DiscSN}$ that takes $x^{\rm SN}_1,...,x^{\rm SN}_{N_{\rm SN}}$ with probabilities $p^{\rm SN}_1,...,p^{\rm SN}_{N_{\rm SN}}$, respectively, can be viewed as a discretized approximation of a standard normal random variable, and we obtain $x^{\rm SN}_i$ with probability $p_i$ when we measure $\ket{\rm SN}$.
We hereafter use $X_{\rm DiscSN}$ as if it followed the standard normal distribution and neglect any errors caused by this approximation.

Some implementations for this type of oracle have been proposed.
Originally, the implementation as a series of arithmetic circuits and controlled rotation was proposed in \cite{Grover}, and some extensions and modifications on this have been also proposed recently \cite{Sanders,Kaneko2,Marin-Sanchez}.
In another direction, some state preparation methods based on variational algorithms with parametric quantum circuits have been considered \cite{Zhu,Zoufal,Chakrabarti,Nakaji2,Alcazar}.

\subsection{\label{sec:qMultiExp} Quantum algorithm for estimating multiple expected values}

We next explain the quantum algorithm for simultaneous estimation of expected values of multiple random variables, which is the core of the proposed CVaR contribution calculation method.
There are some recent proposals on such a quantum algorithm \cite{Huggins,Cornelissen}.
In this paper, we use the algorithm in \cite{Cornelissen}.
Formally, we have the following theorem.

\begin{theorem}[Theorem 3.4 in \cite{Cornelissen}]
	
	Let $(\Omega,2^\Omega,\mathbb{P})$ be a probability space where the sample space $\Omega$ is finite and elements $\omega$ in $\Omega$ are associated with mutually orthogonal states $\ket{\omega}$ on a quantum register. 
	Let $\vec{\xi}$ be a $\mathbb{R}^d$-valued random variable on $(\Omega,2^\Omega,\mathbb{P})$, and suppose that $\vec{\xi}$ has a mean $\vec{\mu}=(\mu_1,...,\mu_d)^T\in\mathbb{R}^d$ and a covariance matrix $\Sigma\in\mathbb{R}^{d\times d}$.
	Suppose that we have accesses to an oracle $U_{\mathbb{P}}$ on a quantum register, which acts as
	\begin{equation}
		U_{\mathbb{P}}\ket{0}=\sum_{\omega\in\Omega} \sqrt{\mathbb{P}(\omega)}\ket{\omega}, \label{eq:UP}
	\end{equation} 
	and an oracle $U_{\vec{\xi}}$ on a $(d+1)$-registers system, which acts as
	\begin{equation}
		U_{\vec{\xi}}\ket{\omega}\ket{0}^{\otimes d}=U_{\vec{\xi}}\ket{\omega}\ket{\vec{\xi}(\omega)} \label{eq:Uxi}
	\end{equation}
	for any $\omega\in\Omega$.
	Then, for given $\delta\in(0,1)$ and $n\in\mathbb{N}$ such that $n\ge\log(d/\delta)$, there is a quantum algorithm $\proc{QEstimator}_d(\vec{\xi},n,\delta)$ that outputs an estimate $\vec{\mu}^\prime=(\mu^\prime_1,...,\mu^\prime_d)^T$ of $\vec{\mu}$ such that
	\begin{equation}
		|\mu_i-\mu^\prime_i| \le \frac{\sqrt{{\rm Tr}\Sigma}\log(d/\delta)}{n} \label{eq:err}
	\end{equation}
	for every $i\in[d]$ with probability at least $1-\delta$, making $\widetilde{O}(n)$ calls to $U_{\mathbb{P}}$ and $U_{\vec{\xi}}$.

	\label{th:MulEx}
\end{theorem}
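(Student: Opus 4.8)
The plan is to reduce the simultaneous estimation of all $d$ coordinates of $\vec\mu$ to a single quantum gradient-estimation problem, using the fact that the gradient of the distribution's characteristic function at the origin is precisely $\vec\mu$. First I would combine the two given oracles into a phase oracle: applying $U_{\vec\xi}$ to write $\vec\xi(\omega)$ into an ancillary $d$-register, applying a phase rotation by $\langle\vec y,\vec\xi(\omega)\rangle$ controlled on an auxiliary ``direction'' register holding $\vec y$, and uncomputing with $U_{\vec\xi}^\dagger$, I obtain an operation sending $\ket\omega\ket{\vec y}\mapsto e^{i\langle\vec y,\vec\xi(\omega)\rangle}\ket\omega\ket{\vec y}$. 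Preparing the distribution state with $U_{\mathbb P}$ on the $\omega$-register and projecting back onto it then exposes, as a function of $\vec y$, the characteristic function $\varphi(\vec y)=\mathbb E[e^{i\langle\vec y,\vec\xi\rangle}]$ of $\vec\xi$.

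Since $\varphi(\vec y)=\exp\!\big(i\langle\vec y,\vec\mu\rangle-\tfrac12\vec y^{\,T}\Sigma\vec y+\cdots\big)$ near the origin, we have $\nabla_{\vec y}\big(\arg\varphi\big)\big|_{\vec y=0}=\vec\mu$, so every coordinate of $\vec\mu$ is a partial derivative of the same scalar phase function. I would therefore feed the phase oracle into the quantum gradient-estimation algorithm of Jordan and of Gilyén et al.: evaluating the phase coherently on a suitably scaled grid in the direction register and applying an inverse quantum Fourier transform yields estimates of all $d$ partial derivatives, hence all $d$ components of $\vec\mu$, at once. The number of phase-oracle applications needed is set by the phase precision one must resolve, and it is this precision requirement, rather than $d$ itself, that governs the query count $\widetilde O(n)$ and the error (\ref{eq:err}).

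The hard part will be controlling the systematic error from the curvature term $-\tfrac12\vec y^{\,T}\Sigma\vec y$ and the higher-order corrections, which bias the extracted gradient unless the grid scale $r$ is small. Keeping the quadratic term, of order $r^2\,\mathrm{Tr}\,\Sigma$, negligible against the linear signal forces $r\lesssim (\mathrm{Tr}\,\Sigma)^{-1/2}$, and resolving the linear phase $\langle\vec y,\vec\mu\rangle$ to coordinatewise accuracy $\epsilon$ on such a grid then demands on the order of $\sqrt{\mathrm{Tr}\,\Sigma}/\epsilon$ phase-oracle repetitions; this trade-off is exactly what makes $n$ scale with $\sqrt{\mathrm{Tr}\,\Sigma}$ and produces the $\sqrt{\mathrm{Tr}\,\Sigma}$ factor in (\ref{eq:err}). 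A secondary technical point is that $\vec\xi$ need not be bounded, so I would truncate to $\{\|\vec\xi\|\lesssim\sqrt{\mathrm{Tr}\,\Sigma}\}$, bound the discarded tail via Chebyshev's inequality using $\Sigma$, and fold the bias into the error budget. Finally I would amplify the success probability to $1-\delta$ by running $O(\log(d/\delta))$ independent repetitions and taking the coordinatewise median, which supplies the $\log(d/\delta)$ factor while leaving the total query count at $\widetilde O(n)$.
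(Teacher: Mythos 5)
The paper offers no proof of this statement: it is quoted verbatim as Theorem 3.4 of the cited work of Cornelissen, Hamoudi and Jerbi and is used purely as a black box, so there is no in-paper argument to compare yours against. Your sketch is a faithful reconstruction of the proof strategy in that reference --- building a phase oracle from $U_{\mathbb{P}}$ and $U_{\vec{\xi}}$, applying quantum gradient estimation in the style of Jordan and Gily\'{e}n et al.\ to the characteristic function, controlling the bias from the quadratic (variance) term by the choice of grid scale, truncating the unbounded $\vec{\xi}$ via a Chebyshev tail bound, and boosting to confidence $1-\delta$ by a coordinatewise median over $O(\log(d/\delta))$ repetitions --- which is exactly how the cited theorem is established.
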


On the oracles $U_{\mathbb{P}}$ and $U_{\vec{\xi}}$ needed for this algorithm, we will later present the concrete way to construct for CVaR contribution calculation.
Note that, along with $U_{\mathbb{P}}$ and $U_{\vec{\xi}}$, $\proc{QEstimator}_d(\vec{\xi},n,\delta)$ contains $\widetilde{O}(dn)$ uses of arithmetic circuits, since we calculate the inner product $\vec{u}\cdot\vec{\xi}$ with some $\vec{u}\in\mathbb{R}^d$ in the algorithm \cite{Cornelissen}.
However, we neglect them when we later count arithmetic computations in the proposed quantum method, since $U_{\mathbb{P}}$ and $U_{\vec{\xi}}$ in the method contain more arithmetic computations.
That is, as we will see later, they use arithmetic circuits $O(N_{\rm obl})$ times, and the total number in $\proc{QEstimator}_d(\vec{\xi},n,\delta)$ is therefore $O(nN_{\rm obl})$, which is larger than $\widetilde{O}(dn)$ since $d=N_{\rm gr}\le N_{\rm obl}$ in the current problem.


\subsection{\label{sec:QAA} Fixed-point quantum amplitude amplification}

We also explain the fixed-point QAA presented in \cite{Yoder}, which is an important subroutine in the proposed method.
This is a modified version of the original QAA algorithm in \cite{Brassard}.
Unlike the original one, the fixed-point QAA, as an unitary operation, amplifies the squared amplitude of the marked state in a superposition state to the value arbitrarily close to 1, without the unintended amplitude decay by too much iterative operations.
Formally, we have the following theorem.

\begin{theorem}[\cite{Yoder}]
	Let $\ket{\Phi}=\sqrt{p}\ket{\phi}\ket{1}+\sqrt{1-p}\ket{\phi^\prime}\ket{0}$ be a state on a system $R$ with a qubit $A$ attached, where $\ket{\phi}$ and $\ket{\phi^\prime}$ are states on $R$ and $p\in(0,1)$.
	Suppose that we are given an access to the oracle $U$ that prepares $\ket{\Phi}$: $U\ket{0}\ket{0}=\ket{\Phi}$.
	Then, for any $\delta\in(0,1)$, there is an unitary $V$ that prepares a state $\ket{\Phi^\prime}$ in the form of $\ket{\Phi^\prime}=\sqrt{1-\delta^\prime}\ket{\phi}\ket{1}+\sqrt{\delta^\prime}\ket{\phi^{\prime\prime}}\ket{0}$, where $\delta^\prime\in[0,\delta)$ and $\ket{\phi^{\prime\prime}}$ is some state on $R$, making $O\left(\frac{\log(\delta^{-1})}{\sqrt{p}}\right)$ calls to $U$.
	\label{th:FQAA}
\end{theorem}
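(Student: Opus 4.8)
The plan is to realize $V$ as the fixed-point amplitude-amplification sequence of \cite{Yoder}, assembled from two families of generalized (phase) reflections, and to reduce its analysis to a two-dimensional invariant subspace in which the net action is governed by Chebyshev polynomials. First I would identify the good subspace by the projector $\Pi_g:=I\otimes\ket{1}\bra{1}$ (qubit $A$ in state $\ket{1}$), so that the prepared state $\ket{\Phi}=U\ket{0}\ket{0}$ satisfies $\|\Pi_g\ket{\Phi}\|^2=p$, and I would record that $\ket{\phi}\ket{1}$ and $\ket{\phi^\prime}\ket{0}$ span a two-dimensional subspace $\mathcal{S}\ni\ket{\Phi}$. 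Next I would introduce, for phases $\alpha,\beta\in\mathbb{R}$,
\begin{align}
	R_A(\alpha)&:=I-(1-e^{i\alpha})\Pi_g, \nonumber\\
	R_\Phi(\beta)&:=U\bigl(I-(1-e^{i\beta})\Pi_0\bigr)U^\dagger, \nonumber
\end{align}
where $\Pi_0:=\ket{0}\ket{0}\bra{0}\bra{0}$ is the projector onto the input state. Here $R_A(\alpha)$ applies the phase $e^{i\alpha}$ to the good subspace and costs no call to $U$ (it is a single-qubit phase gate on $A$), while $R_\Phi(\beta)$ applies a phase to $\ket{\Phi}$ and costs one call each to $U$ and $U^\dagger$. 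Both preserve $\mathcal{S}$, a fact I would use throughout.

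Then I would assemble the amplification unitary with $L:=2l+1$ and
\[
	V:=R_\Phi(\beta_l)R_A(\alpha_l)\cdots R_\Phi(\beta_1)R_A(\alpha_1)\,U,
\]
with the phase schedule of \cite{Yoder}, namely $\alpha_j=-\beta_{l-j+1}=2\operatorname{arccot}\!\bigl(\tan(2\pi j/L)\sqrt{1-\gamma^{-2}}\bigr)$ and $\gamma^{-1}:=T_{1/L}(1/\delta)$, where $T_{1/L}(x)=\cosh\bigl(L^{-1}\operatorname{arccosh}x\bigr)$ for $x\ge1$ is the analytically continued Chebyshev polynomial of the first kind. Expanding the $R_\Phi$ factors shows $V$ makes exactly $L$ calls to $U$ and $U^\dagger$. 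Restricting $V$ to $\mathcal{S}$, each block $R_\Phi(\beta_j)R_A(\alpha_j)$ acts as an $SU(2)$ rotation, so on $\mathcal{S}$ the whole of $V$ is a product of single-qubit rotations (the quantum signal processing setting). The central identity I would then establish, following \cite{Yoder}, is that the residual bad weight equals
\[
	\delta^\prime:=\bigl\|(I-\Pi_g)V\ket{0}\ket{0}\bigr\|^2=\delta^2\,T_L^2\!\left(T_{1/L}(1/\delta)\sqrt{1-p}\right),
\]
and, since $\mathcal{S}$ is invariant and the good component is carried into a multiple of $\ket{\phi}\ket{1}$, that $V\ket{0}\ket{0}$ has exactly the claimed form $\sqrt{1-\delta^\prime}\ket{\phi}\ket{1}+\sqrt{\delta^\prime}\ket{\phi^{\prime\prime}}\ket{0}$.

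To finish, I would choose $L$. Because $\delta\in(0,1)$ gives $\delta^2<\delta$, it suffices to force the Chebyshev argument into $[-1,1]$, since then $|T_L(\cdot)|\le1$ and hence $\delta^\prime\le\delta^2<\delta$. The requirement $T_{1/L}(1/\delta)\sqrt{1-p}\le1$ reads $L^{-1}\operatorname{arccosh}(1/\delta)\le\operatorname{arccosh}\bigl(1/\sqrt{1-p}\bigr)$; using $\operatorname{arccosh}(1/\delta)\le\log(2/\delta)$ and the fact that $\operatorname{arccosh}\bigl(1/\sqrt{1-p}\bigr)$ behaves like $\sqrt{p}$ for small $p$, taking
\[
	L=O\!\left(\frac{\log(\delta^{-1})}{\sqrt{p}}\right)
\]
above this threshold suffices. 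As $V$ uses $L$ calls to $U$, this yields the stated query complexity.

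The hard part will be the central identity for $\delta^\prime$. Once the problem is reduced to the invariant subspace $\mathcal{S}$ it becomes a single-qubit (QSP) problem, but verifying that the explicit phase schedule realizes precisely the transfer polynomial $x\mapsto\delta\,T_L\bigl(T_{1/L}(1/\delta)\,x\bigr)$ requires either the achievability direction of quantum signal processing or a direct inductive argument on the Chebyshev recurrence, and checking the closed form of the $\alpha_j$ is the delicate step. Everything else---the invariance of $\mathcal{S}$, the implementation and per-reflection query cost, and the $\operatorname{arccosh}$ asymptotics that fix $L$---I expect to be routine.
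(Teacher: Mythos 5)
The paper does not actually prove Theorem~\ref{th:FQAA}: it is imported verbatim from \cite{Yoder}, so there is no in-paper argument to compare against. Your proposal is a faithful reconstruction of the proof in that reference --- the same two families of generalized reflections, the same reduction to the two-dimensional invariant subspace, the same Chebyshev transfer identity $\delta^\prime=\delta^2\,T_L^2\bigl(T_{1/L}(1/\delta)\sqrt{1-p}\bigr)$ for the residual bad weight, and the same $\operatorname{arccosh}$ computation fixing $L=O\bigl(\log(\delta^{-1})/\sqrt{p}\bigr)$ --- and the overall structure is sound. Two caveats. First, your phase formula has an inverted exponent: with your own definition $\gamma^{-1}:=T_{1/L}(1/\delta)\ge 1$, the quantity $1-\gamma^{-2}$ is nonpositive and $\sqrt{1-\gamma^{-2}}$ is not real; the correct expression in \cite{Yoder} is $\alpha_j=-\beta_{l-j+1}=2\operatorname{arccot}\bigl(\tan(2\pi j/L)\sqrt{1-\gamma^{2}}\bigr)$ with $\gamma=T_{1/L}(1/\delta)^{-1}\le 1$. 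Second, as you yourself flag, the central identity for $\delta^\prime$ is asserted rather than derived, and that identity \emph{is} the substance of the reference's proof (an induction on the Chebyshev recurrence in the $SU(2)$ picture of the invariant subspace); so as written the proposal is a correct outline rather than a complete proof. The remaining steps --- invariance of $\mathcal{S}$ under both reflections, the count of $L=2l+1$ calls to $U$ and $U^\dagger$, the bound $\delta^\prime\le\delta^2<\delta$ once the Chebyshev argument lies in $[-1,1]$, and the estimate $\operatorname{arccosh}\bigl(1/\sqrt{1-p}\bigr)\ge\sqrt{p}$ that fixes $L$ --- all check out, with the minor implicit caveat (shared by the theorem statement itself) that choosing $L$ requires a lower bound on $p$.
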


Here, we assume that the marked state and other states are distinguished by whether a specific qubit $A$ is $\ket{1}$ or $\ket{0}$, which suffices for the current problem of CVaR contribution calculation.
\cite{Yoder} assumes that we can use the oracle $U_T$ such that, for the market state $\ket{T}$ with an ancilla qubit, $U_T\ket{T}\ket{b}=\ket{T}\ket{b\otimes1}$, and that, for any state $\ket{T^\prime}$ orthogonal to $\ket{T}$, $U_T\ket{T^\prime}\ket{b}=\ket{T^\prime}\ket{b}$, where $b\in\{0,1\}$.
Now, this is just a CNOT gate controlled by $A$, and therefore we will not consider the number of calls to it hereafter.

\subsection{Proposed algorithm}

We finally present the quantum algorithm for calculating CVaR contributions. 
Formally, we have the following theorem. 

\begin{theorem}
	Suppose that $N_{\rm obl}\in\mathbb{N}$, $e_1,...,e_{N_{\rm obl}}\in\mathbb{R}_+$, $a_1,...,a_{N_{\rm obl}}\in(0,1)$, $z_1,...,z_{N_{\rm obl}}\in\mathbb{R}$, $\alpha\in(0,1)$ and $\delta\in(0,1)$ are given.
	Suppose that $N_{\rm gr}\in\mathbb{N}$ and $n_1,...,n_{N_{\rm gr}}\in\mathbb{N}$ such that $\sum_{K=1}^{N_{\rm gr}}n_K=N_{\rm obl}$ are given.
	Let $v$ be a positive real number such that $p:={\rm Pr}(L\ge v)>0$, where $L$ is given in (\ref{eq:loss}).
	Suppose that we are given $C_{\rm max}\in\mathbb{R}$ satisfying
	\begin{equation}
	\max\{C^{1}_v(e_1,...,e_{N_{\rm obl}}),...,C^{N_{\rm gr}}_v(e_1,...,e_{N_{\rm obl}})\}\le C_{\rm max}. \label{eq:Cmax}
	\end{equation}
	Suppose that we are given $\sigma_{\rm max}\in\mathbb{R}$ satisfying
	\begin{equation}
		\max\{\sigma_1,...,\sigma_{N_{\rm gr}}\}\le \sigma_{\rm max}, \label{eq:sigmamax}
	\end{equation}
	where $(\sigma_K)^2$ is the conditional variance of the random variable $L_K$ in (\ref{eq:LK}) given that $L\ge v$.
	Suppose that we have an access to the SN state generation oracle $U^{\rm SN}$.
	Then, for any $\epsilon\in\mathbb{R}_+$ satisfying $\epsilon\le\sigma_{\rm max}\sqrt{N_{\rm gr}}$, there is a quantum algorithm that, with probability at least $1-\delta$, outputs the $\epsilon$-approximations $c_1,...,c_{N_{\rm gr}}\in\mathbb{R}$ of $C^{1}_v(e_1,...,e_{N_{\rm obl}}),...,C^{N_{\rm gr}}_v(e_1,...,e_{N_{\rm obl}})$, respectively, using $U^{\rm SN}$
	\begin{equation}
		\widetilde{O}\left(\frac{\sigma_{\rm max}\sqrt{N_{\rm gr}}}{\epsilon\sqrt{p}}\log\left(\max\left\{\frac{C_{\rm max}}{\epsilon},\frac{E_{\rm max}}{\sigma_{\rm max}}\right\}\right)\log\left(\frac{N_{\rm gr}}{\delta}\right)\right) \label{eq:callNumRN}
	\end{equation}
	times, and arithmetic circuits and angle-controlled Y rotations
	\begin{equation}
		\widetilde{O}\left(\frac{\sigma_{\rm max}N_{\rm obl}\sqrt{N_{\rm gr}}}{\epsilon\sqrt{p}}\log\left(\max\left\{\frac{C_{\rm max}}{\epsilon},\frac{E_{\rm max}}{\sigma_{\rm max}}\right\}\right)\log\left(\frac{N_{\rm gr}}{\delta}\right)\right) \label{eq:callNumAri}
	\end{equation}
	times.
	Here, $E_{\rm max}:=\max\{E_1,...,E_{N_{\rm gr}}\}$, where, for $K\in[N_{\rm gr}]$, $E_K:=\sum_{k=\kappa^K_1}^{\kappa^K_{n_K}}e_k$.
	\label{th:main}
\end{theorem}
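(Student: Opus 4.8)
The plan is to recognize each CVaR contribution $C^K_v=\mathbb{E}_{\rm Mer}[L_K\mid L\ge v]$ as one entry of the mean vector of the single $\mathbb{R}^{N_{\rm gr}}$-valued random variable $\vec{\xi}=(L_1,\dots,L_{N_{\rm gr}})$ taken under the conditional law $\mathbb{P}(\cdot\mid L\ge v)$, and then to run $\proc{QEstimator}_{N_{\rm gr}}$ of Theorem~\ref{th:MulEx} with $d=N_{\rm gr}$. That theorem asks for two oracles: the value oracle $U_{\vec{\xi}}$, which I would realize by multiplying each exposure $e_k$ by the default indicator $Y_k$ and accumulating the products within each group as in (\ref{eq:LK}), using $O(N_{\rm obl})$ arithmetic circuits; and the distribution oracle $U_{\mathbb{P}}$, which here must prepare the conditional state $p^{-1/2}\sum_{\omega:L(\omega)\ge v}\sqrt{\mathbb{P}(\omega)}\ket{\omega}$. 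Building the latter is the heart of the construction.

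I would first assemble an unconditional preparation unitary $U_{\rm state}$ whose marked branch carries exactly this conditional state. The decisive efficiency point, and the reason the $U^{\rm SN}$ count (\ref{eq:callNumRN}) lacks the factor $N_{\rm obl}$ that (\ref{eq:callNumAri}) carries, is that conditionally on the systematic factor $X_0=x_0$ the indicators $Y_1,\dots,Y_{N_{\rm obl}}$ are independent Bernoulli variables with parameters $P^{\rm def}_k(x_0)$. Hence $U_{\rm state}$ calls $U^{\rm SN}$ only once, to load $X_0$; each default is then produced in place by computing $P^{\rm def}_k(X_0)$ with the arithmetic circuits and $U_{\Phi_{\rm SN}}$, converting it to the angle $\arccos(\sqrt{1-P^{\rm def}_k(X_0)})$ with $U_{\rm acsr}$, and rotating a fresh qubit with $U_{\rm CRY}$ so that it carries amplitude $\sqrt{P^{\rm def}_k(X_0)}$ on $\ket{1}$. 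Accumulating the $L_K$ and $L=\sum_K L_K$ and flagging $1_{L\ge v}$ with $U_{\rm comp}$ yields $U_{\rm state}\ket{0}=\sqrt{p}\ket{\phi}\ket{1}+\sqrt{1-p}\ket{\phi^\prime}\ket{0}$ with $\ket{\phi}$ the desired conditional state, at a cost of one $U^{\rm SN}$ and $O(N_{\rm obl})$ arithmetic and $U_{\rm CRY}$ gates. Feeding $U_{\rm state}$ to the fixed-point QAA of Theorem~\ref{th:FQAA} then gives, with $O(\log(1/\delta^\prime)/\sqrt{p})$ calls, a unitary preparing a state $\delta^\prime$-close to $\ket{\phi}\ket{1}$, which serves as $U_{\mathbb{P}}$.

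With these oracles in hand, the guarantee (\ref{eq:err}) yields entrywise accuracy $\sqrt{{\rm Tr}\,\Sigma}\,\log(N_{\rm gr}/\delta)/n$, where $\Sigma$ is the conditional covariance of $\vec{\xi}$. Since ${\rm Tr}\,\Sigma=\sum_{K}\sigma_K^2\le N_{\rm gr}\sigma_{\rm max}^2$ by (\ref{eq:sigmamax}), taking $n=\widetilde{O}\bigl(\sigma_{\rm max}\sqrt{N_{\rm gr}}\,\log(N_{\rm gr}/\delta)/\epsilon\bigr)$ drives the error below $\epsilon$; the hypothesis $\epsilon\le\sigma_{\rm max}\sqrt{N_{\rm gr}}$ is precisely what keeps this choice consistent with the requirement $n\ge\log(d/\delta)$ of Theorem~\ref{th:MulEx}. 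Multiplying $n$ by the QAA overhead $O(\log(1/\delta^\prime)/\sqrt{p})$ and by the per-preparation oracle tallies above reproduces the claimed counts (\ref{eq:callNumRN}) and (\ref{eq:callNumAri}), with the $N_{\rm obl}$ asymmetry tracing back directly to the single-versus-$O(N_{\rm obl})$ usage inside $U_{\rm state}$.

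I expect the main obstacle to lie in controlling the error introduced by feeding the only approximately amplified state into Theorem~\ref{th:MulEx}, since the residual branch $\sqrt{\delta^\prime}\ket{\phi^{\prime\prime}}\ket{0}$ simultaneously biases each estimated mean and inflates the effective covariance. The bias of entry $K$ is $O(\delta^\prime C_{\rm max})$ by (\ref{eq:Cmax}), while the inflation of ${\rm Tr}\,\Sigma$ is $O(N_{\rm gr}\delta^\prime E_{\rm max}^2)$ because on the residual branch each $L_K$ is still bounded only by $E_K\le E_{\rm max}$; demanding the former below $\epsilon$ and the latter below $N_{\rm gr}\sigma_{\rm max}^2$ forces $\log(1/\delta^\prime)=O\bigl(\log(\max\{C_{\rm max}/\epsilon,\,E_{\rm max}/\sigma_{\rm max}\})\bigr)$, which is exactly the logarithmic factor appearing in (\ref{eq:callNumRN}) and (\ref{eq:callNumAri}). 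A union bound over the failure probability of the fixed-point QAA and the $1-\delta$ guarantee of $\proc{QEstimator}_{N_{\rm gr}}$ then yields the overall success probability $1-\delta$, completing the argument.
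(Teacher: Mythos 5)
Your proposal follows essentially the same route as the paper: one $U^{\rm SN}$ call plus $O(N_{\rm obl})$ arithmetic and controlled rotations to build the unconditional state with an $L\ge v$ flag, fixed-point QAA to amplify the flagged branch, $\proc{QEstimator}_{N_{\rm gr}}$ on the group losses, and exactly the same accounting of the residual-branch bias ($O(\delta^\prime C_{\rm max})$) and covariance inflation ($O(\delta^\prime E_{\rm max}^2)$ per entry) to fix $\log(1/\delta^\prime)=O(\log(\max\{C_{\rm max}/\epsilon,E_{\rm max}/\sigma_{\rm max}\}))$. The paper merely formalizes this by defining the post-QAA state as an exact probability measure with $\xi_K$ set to zero (via the flag qubit) off the event $L\ge v$, which is what your bias bound implicitly assumes.
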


\begin{proof}
	
	Consider the following probability space $(\Omega,2^\Omega,\mathbb{P})$.
	The sample space is
	\begin{equation}
		\Omega = [N_{\rm SN}]\times \{0,1\}^{N_{\rm obl}}.
	\end{equation}
	The probability measure $\mathbb{P}$ takes the form that, for each $\omega=(i,y_1,...,y_{N_{\rm obl}})\in\Omega$,
	\begin{equation}
		\mathbb{P}(\omega)=
		\begin{dcases}
		\frac{1-\epsilon^\prime}{p}\mathcal{P}(i;y_1,\cdots,y_{N_{\rm obl}}) & ; \ {\rm if} \ \sum_{k=1}^{N_{\rm obl}}e_ky_k\ge v \\
		\frac{\epsilon^\prime}{1-p}\widetilde{\mathbb{P}}(i,y_1,...,y_{N_{\rm obl}}) & ; \ {\rm otherwise}
		\end{dcases}.
	\end{equation}
	Here,
	\begin{eqnarray}
		&&\mathcal{P}(i;y_1,\cdots,y_{N_{\rm obl}}):= \nonumber \\
		&&\quad p^{\rm SN}_i\prod_{k=1}^{N_{\rm obl}}\left(P^{\rm def}_k(x^{\rm SN}_i)\delta_{y_k,1}+(1-P^{\rm def}_k(x^{\rm SN}_i))\delta_{y_k,0}\right)
	\end{eqnarray}
	is the probability that $X_0=x^{\rm SN}_i,Y_1=y_1,...,Y_{N_{\rm obl}}=y_{N_{\rm obl}}$, and $\frac{1}{p}\mathcal{P}(i;y_1,\cdots,y_{N_{\rm obl}})$ is the conditional probability of the same event given $\sum_{k=1}^{N_{\rm obl}}e_ky_k\ge v$.
	$\epsilon^\prime$ is some real number satisfying
	\begin{equation}
		0\le\epsilon^\prime\le\min\left\{\frac{\epsilon}{2C_{\rm max}},\left(\frac{\sigma_{\rm max}}{E_{\rm max}}\right)^2\right\}.
		\label{eq:epsPrCond}
	\end{equation}
	$\widetilde{\mathbb{P}}$ is some probability measure on a measurable space $(\Omega,2^{\Omega})$.
	Note that
	\begin{equation}
		\epsilon^\prime\le 1 \label{eq:epsprless1}
	\end{equation}
	holds because of $(\sigma_{\rm max})^2\le(E_{\rm max})^2$, which follows from $L_K\le E_{\rm max}$ for every $K\in[N_{\rm gr}]$, and that (\ref{eq:epsprless1}) makes $\mathbb{P}$ non-negative and therefore well-defined.
	Besides, consider the following random variable $\vec{\xi}=(\xi_1,...,\xi_{N_{\rm gr}})^T\in\mathbb{R}^{N_{\rm gr}}$ on $(\Omega,2^\Omega,\mathbb{P})$: for any $K\in[N_{\rm gr}]$ and $\omega=(i,y_1,...,y_{N_{\rm obl}},w)\in\Omega$,
	\begin{equation}
		\xi_K(\omega) := 
		\begin{cases}
			\sum_{k=\kappa^K_1}^{\kappa^K_{n_K}}e_ky_k  & ; \ {\rm if} \ \sum_{k=1}^{N_{\rm obl}}e_ky_k\ge v \\
			0 & ; \ {\rm otherwise}
		\end{cases}. \label{eq:xiK}
	\end{equation}
	Then, the expected value of $\xi_K$ is
	\begin{equation}
		\mathbb{E}[\xi_K]=(1-\epsilon^\prime)C^K_v(e_1,...,e_{N_{\rm obl}}).
	\end{equation}
	Because of (\ref{eq:Cmax}) and (\ref{eq:epsPrCond}),
	\begin{equation}
		\left|C^K_v(e_1,...,e_{N_{\rm obl}})-\mathbb{E}[\xi_K]\right| \le \frac{\epsilon}{2}
	\end{equation}
	holds.
	Therefore, if we obtain $\epsilon/2$-approximations $\bar{\xi}_1,...,\bar{\xi}_{N_{\rm gr}}$ of $\mathbb{E}[\xi_1],...,\mathbb{E}[\xi_{N_{\rm gr}}]$, they are also $\epsilon$-approximations of $C^1_v,...,C^{N_{\rm gr}}_v$.

	Thus, we hereafter consider how to obtain such $\bar{\xi}_1,...,\bar{\xi}_{N_{\rm gr}}$ by $\proc{QEstimator}$.
	To use this, we need the oracles $U_{\mathbb{P}}$ in (\ref{eq:UP}) and $U_{\vec{\xi}}$ in (\ref{eq:Uxi}).
	First, let us consider $U_{\mathbb{P}}$.
	Define
	\begin{eqnarray}
		&&\ket{\Psi_{\ge v}}:= \nonumber \\
		&& \quad \frac{1}{\sqrt{p}}\sum_{i=1}^{N_{\rm SN}}\sum_{\substack{y_1,...,y_{N_{\rm obl}}\in\{0,1\} \\ \sum_{k=1}^{N_{\rm obl}}e_ky_k\ge v}}\sqrt{\mathcal{P}(i;y_1,\cdots,y_{N_{\rm obl}})}\ket{x^{\rm SN}_i}\ket{y_1}\cdots \ket{y_{N_{\rm obl}}}, \nonumber \\
		&&
	\end{eqnarray}
	as a quantum state on a system $S$ consisting of a quantum register and $N_{\rm obl}$ qubits.
	On the system consisting of $S$ and an additional qubit, we can perform the following operation:
	\begin{eqnarray}
		&& \ket{0}\ket{0}^{\otimes N_{\rm obl}}\ket{0} \nonumber \\
		&\rightarrow&\sum_{i=1}^{N_{\rm SN}}\sqrt{p^{\rm SN}_i}\ket{x^{\rm SN}_i}\ket{0}^{\otimes N_{\rm obl}}\ket{0} \nonumber \\
		&\rightarrow&\sum_{i=1}^{N_{\rm SN}}\sqrt{p^{\rm SN}_i}\ket{x^{\rm SN}_i} \left(\sqrt{P^{\rm def}_1(x^{\rm SN}_i)}\ket{1}+\sqrt{1-P^{\rm def}_1(x^{\rm SN}_i)}\ket{0}\right) \nonumber \\
		&&\qquad\quad \otimes\cdots\otimes\left(\sqrt{P^{\rm def}_{N_{\rm obl}}(x^{\rm SN}_i)}\ket{1}+\sqrt{1-P^{\rm def}_{N_{\rm obl}}(x^{\rm SN}_i)}\ket{0}\right)\ket{0} \nonumber \\
		&=&\sum_{i=1}^{N_{\rm SN}}\sum_{y_1,...,y_{N_{\rm obl}}\in\{0,1\}}\sqrt{\mathcal{P}(i;y_1,\cdots,y_{N_{\rm obl}})}\ket{x^{\rm SN}_i}\ket{y_1}\cdots\ket{y_{N_{\rm obl}}}\ket{0} \nonumber \\
		&\rightarrow& \sum_{i=1}^{N_{\rm SN}}\sum_{y_1,...,y_{N_{\rm obl}}\in\{0,1\}}\sqrt{\mathcal{P}(i;y_1,\cdots,y_{N_{\rm obl}})}\ket{x^{\rm SN}_i}\ket{y_1}\cdots\ket{y_{N_{\rm obl}}} \nonumber \\
		&& \qquad\qquad\qquad\qquad \otimes \left(1_{\sum_{k=1}^{N_{\rm obl}}e_ky_k\ge v}\ket{1}+1_{\sum_{k=1}^{N_{\rm obl}}e_ky_k< v}\ket{0}\right) \nonumber \\
		&=& \sqrt{p}\ket{\Psi_{\ge v}}\ket{1}+\sqrt{1-p}\ket{\Psi_{\rm gar}}\ket{0}, \label{eq:opgev}
	\end{eqnarray}
	where $\ket{\Psi_{\rm gar}}$ is some state on $S$ and some ancillary registers are not displayed.
	In (\ref{eq:opgev}), we use $U^{\rm SN}$ at the first arrow.
	At the second arrows, we calculate $\arccos\left(\sqrt{P^{\rm def}_1(x^{\rm SN}_i)}\right),...,\arccos\left(\sqrt{P^{\rm def}_{N_{\rm obl}}(x^{\rm SN}_i)}\right)$ onto undisplayed ancillary registers using arithmetic circuits, and then operate $U_{\rm CRY}$ with the rotation angles specified by these ancillary registers.
	At the third arrow, we calculate $\sum_{k=1}^{N_{\rm obl}}e_ky_k$ by $N_{\rm obl}$ additions and multiplications, then use $U_{\rm comp}$.
	We denote by $U_{\ge v}$ the oracle that acts as (\ref{eq:opgev}).
	Then, because of Theorem \ref{th:FQAA}, we can generate a state
	\begin{equation}
		\ket{\Phi_{\ge v}}:=\sqrt{1-\epsilon^\prime}\ket{\Psi_{\ge v}}\ket{1}+\sqrt{\epsilon^\prime}\ket{\Psi_{\rm gar}^\prime}\ket{0}, \label{eq:Phiv}
	\end{equation}
	where $\ket{\Psi_{\rm gar}^\prime}$ is some state and $\epsilon^\prime$ is some real number satisfying (\ref{eq:epsPrCond}), making
	\begin{equation}
		O\left(\frac{\log\left(\max\left\{\frac{C_{\rm max}}{\epsilon},\frac{E_{\rm max}}{\sigma_{\rm max}}\right\}\right)}{\sqrt{p}}\right) \label{eq:numUgev}
	\end{equation}
	calls to $U_{\ge v}$.
	Note that $\ket{\Phi_{\ge v}}$ is in fact $\sum_{\omega\in\Omega} \sqrt{\mathbb{P}(\omega)}\ket{\omega}$ with identification that
	\begin{equation}
		\ket{\omega} =
		\begin{cases}
			\ket{x^{\rm SN}_i}\ket{y_1}\cdots\ket{y_{N_{\rm obl}}}\ket{1} & ; \ {\rm if} \ \sum_{k=1}^{N_{\rm obl}} e_ky_k\ge v \\
			\ket{x^{\rm SN}_i}\ket{y_1}\cdots\ket{y_{N_{\rm obl}}}\ket{0} & ; \ {\rm otherwise}
		\end{cases}
		\label{eq:ketomega}
	\end{equation}
	for each $\omega=(i,y_1,...,y_{N_{\rm obl}})\in\Omega$.
	Therefore, we hereafter denote by $U_{\mathbb{P}}$ the oracle that generates $\ket{\Phi_{\ge v}}$ from $\ket{0}\ket{0}^{\otimes N_{\rm obl}}\ket{0}$.
	
	On the other hand, constructing $U_{\vec{\xi}}$ is more simple.
	Given a state $\ket{\omega}$ in (\ref{eq:ketomega}) for $\omega\in\Omega$, we can add $N_{\rm gr}$ registers and perform the operation
	\begin{eqnarray}
		&&\ket{x^{\rm SN}_i}\ket{y_1}\cdots\ket{y_{N_{\rm obl}}}\ket{w}\ket{0}^{N_{\rm gr}} \nonumber \\
		&\rightarrow& \ket{x^{\rm SN}_i}\ket{y_1}\cdots\ket{y_{N_{\rm obl}}}\ket{w}\bigotimes_{K=1}^{N_{\rm gr}}\Ket{w\sum_{k=\kappa^K_1}^{\kappa^K_{n_K}}e_ky_k}, \label{eq:UxiConc}
	\end{eqnarray}
	where $w\in\{0,1\}$.
	This can be done by $O(N_{\rm obl})$ multiplications and additions.
	Note that (\ref{eq:UxiConc}) is in fact $U_{\vec{\xi}}$ for $\vec{\xi}$ in (\ref{eq:xiK}), since $w=1$ is $\sum_{k=1}^{N_{\rm obl}} e_ky_k\ge v$ and 0 otherwise.
	
	With $U_{\mathbb{P}}$ and $U_{\vec{\xi}}$ constructed as above, we perform $\proc{QEstimator}_{N_{\rm gr}}\left(\vec{\xi},n,\delta\right)$ with
	\begin{equation}
		n=\left\lceil\frac{2\sqrt{2}\sigma_{\max}\sqrt{N_{\rm gr}}\log(N_{\rm gr}/\delta)}{\epsilon}\right\rceil.
	\end{equation}
	Because of (\ref{eq:err}), each of outcomes $\bar{\xi}_1,...,\bar{\xi}_{N_{\rm gr}}$ of this satisfies 
	\begin{eqnarray}
		\left|\bar{\xi}_K-\mathbb{E}[\xi_K]\right|&\le& \frac{\sqrt{{\rm Tr}\widetilde{\Sigma}}\log(N_{\rm gr}/\delta)}{n} \nonumber \\
		&\le& \frac{\sqrt{N_{\rm gr}(\widetilde{\sigma}_{\max})^2}\log(N_{\rm gr}/\delta)}{n} \nonumber \\
		&\le& \frac{\sqrt{2N_{\rm gr}(\sigma_{\max})^2}\log(N_{\rm gr}/\delta)}{n} \nonumber \\
		&\le& \frac{\epsilon}{2}, \label{eq:errFin}
	\end{eqnarray}
	where $\widetilde{\Sigma}$ is the covariance matrix of $\xi_1,...,\xi_{N_{\rm gr}}$, and $(\widetilde{\sigma}_{\max})^2$ is its largest diagonal element, that is, the maximum of the variances of $\xi_1,...,\xi_{N_{\rm gr}}$.
	The third inequality in (\ref{eq:errFin}) holds since
	\begin{equation}
		(\widetilde{\sigma}_{\max})^2 \le 2(\sigma_{\max})^2, \label{eq:sigmatilsigma}
	\end{equation}
	whose proof is postponed to Appendix \ref{app}.
	(\ref{eq:errFin}) means that $\bar{\xi}_1,...,\bar{\xi}_{N_{\rm gr}}$ are $\epsilon/2$-approximations of $\mathbb{E}[\xi_1],...,\mathbb{E}[\xi_{N_{\rm gr}}]$, and also $\epsilon$-approximations of $C^1_v,...,C^{N_{\rm gr}}_v$, as discussed above. 

	Finally, let us count the numbers of calls to building-block circuits in this algorithm. $\proc{QEstimator}_{N_{\rm gr}}\left(\vec{\xi},n,\delta\right)$ calls $U_{\mathbb{P}}$ and $U_{\vec{\xi}}$ $\widetilde{O}(n)$ times, that is,
	\begin{equation}
		\widetilde{O}\left(\frac{\sigma_{\max}\sqrt{N_{\rm gr}}\log(N_{\rm gr}/\delta)}{\epsilon}\right)
	\end{equation}
	times.
	$U_{\mathbb{P}}$ makes iterative calls to $U_{\ge v}$, whose number is evaluated as (\ref{eq:numUgev}).
	In $U_{\ge v}$, $U^{\rm SN}$ is called once, and arithmetic circuits and $U_{\rm CRY}$ are called $O(N_{\rm obl})$ times.
	On the other hand, $U_{\vec{\xi}}$ consists of $O(N_{\rm obl})$ arithmetic circuits.
	In total, for the numbers of calls to building-block circuits, we obtain evaluations (\ref{eq:callNumRN}) and (\ref{eq:callNumAri}).
\end{proof}

\subsection{Comparison with the classical Monte Carlo method}

We have seen that the proposed quantum algorithm estimates $N_{\rm gr}$ CVaR contributions with accuracy $\epsilon$ and has query complexity scaling on $N_{\rm gr}$ and $\epsilon$ as $O(\sqrt{N_{\rm gr}}/\epsilon)$.
Seemingly, this means so-called quadratic quantum speedup with respect to both $N_{\rm gr}$ and $\epsilon$. 
However, if we do not require the scaling on $\epsilon$ to be $O(\epsilon^{-1})$, we can achieve the better dependence on $N_{\rm gr}$ classically.
In classical Monte Carlo integration, we generate many samples of $X_0$ and $Y_1,...,Y_{N_{\rm gr}}$, calculate $\widetilde{L}_1,...,\widetilde{L}_{N_{\rm gr}}$ for these samples, and take averages.
By this procedure, we obtain $\epsilon$-approximations for $N_{\rm gr}$ CVaR contributions with high probability with $\widetilde{O}(\log N_{\rm gr}/\epsilon^{-2})$ sample complexity \cite{Cornelissen}.
This implies that, depending on $N_{\rm gr}$ and $\epsilon$, the quantum method might not be the best way.

In order to identify the situation where the quantum or classical method is better more precisely, let us evaluate the complexity in the classical method in more detail.
The sample complexity to obtain $\epsilon$-approximations for $C^1_v,...,C^{N_{\rm gr}}_v$ with probability at least $\delta$ in the classical method is \cite{Cornelissen}
\begin{equation}
	\widetilde{\Theta}\left(\frac{(\sigma_{\rm max})^2\log (N_{\rm gr}/\delta)}{\epsilon^2p}\right). \label{eq:clComp}
\end{equation}
Here, the factor $1/p$ appears since we randomly generate $\Theta(N/p)$ samples of $X_0,Y_1,...,Y_{N_{\rm obl}}$ in order to obtain $N$ samples such that $L\ge v$.
In one sample generation, one standard normal random variable is generated, $\Theta(N_{\rm obl})$ arithmetic operations are done, and $N_{\rm obl}$ Bernoulli random variables are generated.
In total, in the classical method, the number of standard normal random variable generations is evaluated as (\ref{eq:clComp}), and the numbers of arithmetic operations and Bernoulli random variable generations are 
\begin{equation}
	\widetilde{\Theta}\left(\frac{(\sigma_{\rm max})^2N_{\rm obl}\log (N_{\rm gr}/\delta)}{\epsilon^2p}\right).
\end{equation}
Then, let us compare these complexity estimations with those for the proposed quantum method.
We see that the quantum method is advantageous in query complexity if
\begin{equation}
	\frac{(\sigma_{\rm max})^2}{\epsilon^2p}\gg N_{\rm gr}, \label{eq:condQAdv}
\end{equation}
although this is a rough discussion with logarithmic factors neglected.

As an extreme case, let us assume that we want CVaR contributions for all the individual obligors.
In this case, $N_{\rm gr}$ takes the maximum $N_{\rm obl}$, and therefore it is most disadvantageous for the quantum method.
Besides, we roughly evaluate $(\sigma_{\rm max})^2$ as $\bar{p}_{\rm def}(1-\bar{p}_{\rm def})\bar{e}^2$, where $\bar{p}_{\rm def}$ and $\bar{e}$ are the typical scale of conditional default probabilities of obligors given $L\ge v$ and that of exposures, respectively.
Moreover, we set the accuracy $\epsilon$ relative to $C_{\rm max}$ and roughly evaluate $C_{\rm max}$ as $\bar{p}_{\rm def}\bar{e}$.
With these evaluations, (\ref{eq:condQAdv}) becomes
\begin{equation}
	\left(\frac{C_{\rm max}}{\epsilon}\right)^2\frac{1-\bar{p}_{\rm def}}{p\bar{p}_{\rm def}}\gg N_{\rm obl}. \label{eq:condQAdv2}
\end{equation}
Under a plausible assumption that $p=0.01,p_{\rm def}\sim 0.01,C_{\rm max}/{\epsilon}\sim0.01$, the left hand side of (\ref{eq:condQAdv2}) is of order $10^8$, and therefore the quantum method is promising to be advantageous when $N_{\rm obl}\ll 10^8$, which is usually satisfied.
In the case that $N_{\rm gr}\ll N_{\rm obl}$, which means that we need CVaR contributions not for individual obligors but for some obligor groups, the quantum method becomes more advantageous. 

\section{\label{sec:sum}Summary}

In this paper, we considered application of quantum algorithms to CVaR contribution calculation, which is an important problem in financial risk management but has not been focused in the context of quantum computing.
We clarified how to apply the recent quantum algorithm for multiple expected value estimation \cite{Cornelissen} to this problem as estimation of many conditional expected value, constructing the oracles needed in the algorithm with fixed-point QAA \cite{Yoder}.
We then evaluated the numbers of queries to the building-block oracles, arithmetic operations, controlled rotation and SN state generation, as (\ref{eq:callNumRN}) and (\ref{eq:callNumAri}).
In terms of query complexity, the proposed method achieves quantum quadratic speedup with respect to the accuracy $\epsilon$, but the scaling of complexity on the number of obligor groups $N_{\rm gr}$ is better in the classical method.
This means that, for small $\epsilon$ and $N_{\rm gr}$, the proposed method becomes advantageous against the classical method.
We saw that, even when we calculate CVaR contributions for all the individual obligors, which is most disadvantageous for the quantum method, its query complexity is smaller than the classical method in some plausible setting.

But, unfortunately, it is obvious that the need for risk contributions of many finely divided obligor groups, which enable a bank to analyze the risk in details, reduces the quantum advantage compared with calculating only the risk measures for the entire portfolio, which was originally considered in \cite{Egger2}.
As future works, not only for credit risk management but also for other types of financial problems, we should extend discussion on quantum computing application to various problem settings arising in practice and scrutinize the degree of quantum advantage.

\section*{Acknowledgment}

This work was supported by MEXT Quantum Leap Flagship Program (MEXT Q-LEAP) Grant Number JPMXS0120319794.

\appendix

\section{Proof of (\ref{eq:sigmatilsigma}) \label{app}}

Let $\mathbb{P}_0$ be a probability measure on the measurable space $(\Omega,2^\Omega)$ such that
\begin{equation}
	\mathbb{P}_0(\omega)=
	\begin{dcases}
		\frac{1}{p}\mathcal{P}(i;y_1,\cdots,y_{N_{\rm obl}}) & ; \ {\rm if} \ \sum_{k=1}^{N_{\rm obl}}e_ky_k\ge v \\
		0 & ; \ {\rm otherwise}
	\end{dcases},
\end{equation}
for each $\omega=(i,y_1,...,y_{N_{\rm obl}})\in\Omega$.
For each $K\in[N_{\rm gr}]$, the variances of $\xi_K$ with respect to $\mathbb{P}_0$ is equal to $(\sigma_K)^2$, and written as
\begin{equation}
	(\sigma_K)^2 = \sum_{\omega\in\Omega_{\ge v}} \mathbb{P}_0(\omega) (\xi_K(\omega))^2 - \left(\sum_{\omega\in\Omega_{\ge v}} \mathbb{P}_0(\omega) \xi_K(\omega)\right)^2, \label{eq:sigmaK}
\end{equation} 
where $\Omega_{\ge v}:=\left\{(i,y_1,...,y_{N_{\rm obl}})\in\Omega \ \middle| \ \sum_{k=1}^{N_{\rm obl}}e_ky_k\ge v\right\}$.
On the other hand, since $\mathbb{P}(\omega)=(1-\epsilon^\prime)\mathbb{P}_0(\omega)$ for $\omega\in\Omega_{\ge v}$, we have
\begin{eqnarray}
	&&(\widetilde{\sigma}_K)^2 \nonumber \\
	&=& \sum_{\omega\in\Omega_{\ge v}} \mathbb{P}(\omega) (\xi_K(\omega))^2 - \left(\sum_{\omega\in\Omega_{\ge v}} \mathbb{P}(\omega) \xi_K(\omega)\right)^2  \nonumber \\
	&=& (1-\epsilon^\prime)\sum_{\omega\in\Omega_{\ge v}} \mathbb{P}_0(\omega) (\xi_K(\omega))^2 - (1-\epsilon^\prime)^2\left(\sum_{\omega\in\Omega_{\ge v}} \mathbb{P}_0(\omega) \xi_K(\omega)\right)^2 \nonumber \\
	&=& (\sigma_K)^2 -\epsilon^\prime\sum_{\omega\in\Omega_{\ge v}} \mathbb{P}_0(\omega) (\xi_K(\omega))^2+\epsilon^\prime(2-\epsilon^\prime)\left(\sum_{\omega\in\Omega_{\ge v}} \mathbb{P}_0(\omega) \xi_K(\omega)\right)^2 \nonumber \\
	&\le& (\sigma_K)^2 +\epsilon^\prime(1-\epsilon^\prime)\sum_{\omega\in\Omega_{\ge v}} \mathbb{P}_0(\omega) (\xi_K(\omega))^2. \nonumber \\
	&& \label{eq:temp}
\end{eqnarray}
In (\ref{eq:temp}), at the inequality, we use (\ref{eq:epsprless1}) and $\sum_{\omega\in\Omega_{\ge v}} \mathbb{P}_0(\omega) (\xi_K(\omega))^2 \ge \left(\sum_{\omega\in\Omega_{\ge v}} \mathbb{P}_0(\omega) \xi_K(\omega)\right)^2$, which follows from the fact that the variance $(\sigma_K)^2$ is non-negative.
Then, applying $\xi_K(\omega)\le E_{\rm max}$, (\ref{eq:epsPrCond}) and (\ref{eq:epsprless1}) to (\ref{eq:temp}), we obtain
\begin{equation}
	(\widetilde{\sigma}_K)^2 \le (\sigma_K)^2 + (\sigma_{\rm max})^2,
\end{equation}
which implies $(\widetilde{\sigma}_{\rm max})^2 \le 2(\sigma_{\rm max})^2$.


\begin{thebibliography}{99}

\bibitem{Orus} R. Orus et al. ``Quantum computing for finance: overview and
prospects", Reviews in Physics 4, 100028 (2019)
\bibitem{Egger} D. J. Egger et al., ``Quantum computing for Finance: state of
the art and future prospects", IEEE Transactions on Quantum
Engineering 1, 3101724 (2020)
\bibitem{Bouland} A. Bouland et al., ``Prospects and challenges of quantum finance", arXiv:2011.06492 (2020)
\bibitem{Herman} D. Herman et al., ``A Survey of Quantum Computing for Finance", arXiv:2201.02773 (2022)
\bibitem{Fischer} M. Fischer, T. Moser and M. Pfeuffer, ``A Discussion on Recent Risk Measures with Application to Credit Risk: Calculating Risk Contributions and Identifying Risk Concentrations", Risks 6(4), 142 (2018)
\bibitem{Glasserman} P. Glasserman, ``Monte Carlo Methods in Financial Engineering", Springer (2003)
\bibitem{Montanaro} A. Montanaro, ``Quantum speedup of Monte Carlo methods",
Proc. Roy. Soc. Ser. A, 471, 2181 (2015)
\bibitem{Suzuki} Y. Suzuki et al., ``Amplitude Estimation without Phase Estimation", Quantum Inf. Process. 19, 75 (2020)
\bibitem{Herbert} S. Herbert, ``Quantum Monte-Carlo Integration: The Full Advantage in Minimal Circuit Depth", arXiv:2105.09100 (2021)
\bibitem{Brassard} G. Brassard et al., ``Quantum amplitude amplification and estimation", Contemporary Mathematics 305, 53 (2002)
\bibitem{Aaronson} S. Aaronson and P. Rall, ``Quantum approximate counting, simplified", in Proceedings of Symposium on Simplicity in Algorithms, pp. 24–32, SIAM (2020)
\bibitem{Nakaji} K. Nakaji, ``Faster Amplitude Estimation", Quantum Inf. Comput. 20, 1109 (2020)
\bibitem{Giurgica-Tiron} T. Giurgica-Tiron et al., ``Low depth algorithms for quantum amplitude estimation", arXiv:2012.03348 (2020)
\bibitem{Grinko} D. Grinko et al., ``Iterative quantum amplitude estimation", npj Quantum Inf. 7, 52 (2021)
\bibitem{Tanaka} T. Tanaka et al., ``Amplitude estimation via maximum likelihood on noisy quantum computer", Quantum Inf. Process. 20, 293 (2021)
\bibitem{Uno} S. Uno et al., ``Modified Grover operator for amplitude estimation", New J. Phys. 23, 083031 (2021)
\bibitem{Wang} G. Wang et al., ``Bayesian inference with engineered likelihood functions for robust amplitude estimation", PRX Quantum 2, 010346 (2021)
\bibitem{Egger2} D. J. Egger et al., ``Credit risk analysis using quantum computers", IEEE Transactions on Computers 70(12), 2136 (2020)
\bibitem{Miyamoto} K. Miyamoto and K. Shiohara, ``Reduction of qubits in a quantum algorithm for Monte Carlo simulation by a pseudo-random-number generator", Phys. Rev. A 102, 022424 (2020)
\bibitem{Kaneko} K. Kaneko et al., ``Quantum speedup of Monte Carlo integration with respect to the number of dimensions and its application to finance", Quantum Inf. Process. 20, 185 (2021)
\bibitem{Litterman} R. Litterman, ``Hot spots and hedges", Journal of Portfolio Management 23(5), 52 (1996)
\bibitem{Tasche} D. Tasche, ``Risk contributions and performance measurement", Working paper,
Zentrum Mathematik (SCA), TU M\"{u}nchen (1999)
\bibitem{Kurth} A. Kurth and D. Tasche, ``Contributions to credit risk", Risk 16(3), 84 (2003)
\bibitem{Kalkbrener} M. Kalkbrener, ``An axiomatic approach to capital allocation", Mathematical Finance 15, 425 (2003)
\bibitem{Kalkbrener2} M. Kalkbrener, H. Lotter and L. Overbeck, ``Sensible and efficient capital allocation for credit portfolios", Risk 17(1), S19-S24 (2004)
\bibitem{Glasserman2} P. Glasserman, ``Measuring marginal risk contributions in credit portfolios", FDIC Center for Financial Research Working Paper 2005-01 (2005)
\bibitem{Martin} R. Martin, K. Thompson and C. Browne, ``VaR: who contributes and how
much?" Risk 14(8), 99 (2001)
\bibitem{Muromachi} Y. Muromachi, ``A conditional independence approach for portfolio risk evaluation", Journal of Risk 7(1), 27 (2004)
\bibitem{Muromachi2} Y. Muromachi, ``Improved estimation methods for value-at-risk, expected shortfall and risk contributions with high precision", Journal of Risk 17(6), 1 (2015)
\bibitem{Cornelissen} A. Cornelissen, Y. Hamoudi and S. Jerbi, ``Near-Optimal Quantum Algorithms for Multivariate Mean Estimation", arXiv:2111.09787 (2021)
\bibitem{Huggins} W. J. Huggins et al., ``Nearly Optimal Quantum Algorithm for Estimating Multiple Expectation Values", arXiv:2111.09283 (2021)
\bibitem{Merton} R. C. Merton, ``On the pricing of corporate debt: The risk structure of interest rates", J. Finance 29, 449 (1974)
\bibitem{Yoder} T. Y. Yoder, G. H. Low and I. L. Chuang, ``Fixed-point quantum search with an optimal number of queries", Phys. Rev. Lett. 113, 210501 (2014)
\bibitem{Vedral} V. Vedral, A. Barenco and A. Ekert, ``Quantum Networks for Elementary Arithmetic Operations", Phys. Rev. A 54, 147 (1996)
\bibitem{Beckman} D. Beckman et al., ``Efficient networks for quantum factoring", Phys. Rev. A, 54, 1034 (1996)
\bibitem{Draper} T. G. Draper, ``Addition on a quantum computer", arXiv:quant-ph/0008033 (2000)
\bibitem{Cuccaro} S. A. Cuccaro et al., ``A new quantum ripple-carry addition circuit", in Proceeding of the Eighth Workshop on Quantum Information Processing (2004)
\bibitem{Takahashi} Y. Takahashi and N. Kunihiro, ``A linear-size quantum circuit for addition with no ancillary qubits", Quantum Inf. Comput. 5(6), 440 (2005)
\bibitem{Draper2} T. G. Draper et al., ``A logarithmic-depth quantum carry-lookahead adder", Quantum Inf. Comput. 6(4), 351 (2006)
\bibitem{Alvarez-Sanchez} J. J. Alvarez-Sanchez, J. V. Alvarez-Bravo and L. M. Nieto, ``A quantum architecture for multiplying signed integers", J. Phys.: Conf. Ser. 128, 012013 (2008)
\bibitem{Takahashi2} Y. Takahashi and N. Kunihiro, ``A fast quantum circuit for addition with
few qubits", Quantum Inf. Comput. 8(6), 636 (2008)
\bibitem{Takahashi3} Y. Takahashi, S. Tani and N. Kubihiro, ``Quantum addition circuits and unbounded fan-out", Quantum Inf. Comput. 10(9\&10), 872 (2010)
\bibitem{Babu} H. M. H. Babu, ``Cost-efficient design of a quantum multiplier-accumulator unit", Quantum Inf. Process. 16(1), 30 (2016)
\bibitem{Jayashree} H. V. Jayashree et al., ``Ancilla-input and garbage-output optimized design of a reversible quantum integer multiplier", The Journal of Supercomputing, 72(4), 1477 (2016)
\bibitem{Munoz-Coreas} E. Mu\~{n}oz-Coreas and H. Thapliyal, ``Quantum Circuit Design of a T-count Optimized Integer Multiplier", IEEE Transactions on Computers 68(5), 729 (2019)
\bibitem{Abramowitz} M. Abramowitz and I. Stegun, ``Handbook of Mathematical Functions with Formulas, Graphs, and Mathematical Tables", Dover, New York (1964)
\bibitem{Khosropour} A. Khosropour, H. Aghababa and B. Forouzandeh, ``Quantum division circuit based on restoring division algorithm", in Proceedings of 2011 Eighth International Conference on Information Technology: New Generations (ITNG), pp. 1037 (2011)
\bibitem{Jamal} L. Jamal and H. M. H. Babu, ``Efficient approaches to design a reversible floating point divider", in Proceedings of 2013 IEEE International Symposium on Circuits and Systems (ISCAS), pp. 3004 (2013)
\bibitem{Dibbo} S. V. Dibbo, H. M. H. Babu and L. Jamal, ``An efficient design technique of a quantum divider circuit", in Proceedings of 2016 IEEE International Symposium on Circuits and Systems (ISCAS), pp. 2102 (2016)
\bibitem{Thapliyal} H. Thapliyal et al., ``Quantum Circuit Designs of Integer Division Optimizing T-count and T-depth”, IEEE Transactions on Emerging Topics in Computing 9(2), 1045 (2019)
\bibitem{Haner} T. H\"{a}ner, M. Roetteler and K. M. Svore, ``Optimizing Quantum Circuits for Arithmetic", arXiv:1805.12445 (2018)
\bibitem{Munoz-Coreas2} E. Mu\~{n}oz-Coreas and H. Thapliyal, ``T-count and Qubit Optimized Quantum Circuit Design of the Non-Restoring Square Root Algorithm", ACM J. Emerg. Technol. Comput. Syst. 14(3), 36 (2018)
\bibitem{Grover} L. Grover and T. Rudolph, ``Creating superpositions that correspond to efficiently integrable probability distributions", arXiv:quant-ph/0208112 (2002)
\bibitem{Sanders} Y. R. Sanders, ``Black-box quantum state preparation without arithmetic", Phys. Rev. Lett. 122, 020502 (2019)
\bibitem{Kaneko2} K. Kaneko et al., ``Quantum Pricing with a Smile: Implementation of Local Volatility Model on Quantum Computer", arXiv:2007.01467 (2020)
\bibitem{Marin-Sanchez} G. Marin-Sanchez, J. Gonzalez-Conde and Mikel Sanz, ``Quantum algorithms for approximate function loading", arXiv:2111.07933 (2021)
\bibitem{Zhu} D. Zhu et al., ``Training of Quantum Circuits on a Hybrid Quantum Computer", Science Advances 5, eaaw9918 (2019)
\bibitem{Zoufal} C. Zoufal, A. Lucchi and S. Woerner, ``Quantum generative adversarial networks for learning and loading random distributions", npj Quantum Inf. 5, 103 (2019)
\bibitem{Chakrabarti} S. Chakrabarti et al., ``A Threshold for Quantum Advantage in Derivative Pricing", Quantum 5, 463 (2021)
\bibitem{Nakaji2} K. Nakaji, ``Approximate amplitude encoding in shallow parameterized quantum circuits and its application to financial market indicator", arXiv:2103.13211 (2021)
\bibitem{Alcazar} J. Alcazar et al., ``Quantum algorithm for credit valuation adjustments",  arXiv:2105.12087 (2021)



\end{thebibliography}
\end{document}